\newtheorem{theorem}{Theorem}
\newtheorem{definition}[theorem]{Definition}
\newtheorem{proposition}[theorem]{Proposition}
\newenvironment{proof}[1][Proof]{\textbf{#1.} }{\ \rule{0.5em}{0.5em}} 
\begin{document}

\title{The Estimation Lie Algebra Associated with Quantum Filters}

\author[1]{Nina H. Amini}
\author[2]{John E. Gough}
\affil[1]{nina.amini@lss.supelec.fr, Laboratoire des Signaux et Syst\`{e}mes, 
CNRS - CentraleSup\'{e}lec - Univ. Paris-Sud, Universit\'{e} Paris-Saclay, 3, rue Joliot Curie, 91192 Gif-sur-Yvette, France }
\affil[2]{jug@aber.ac.uk, Department of Physics, Aberystwyth University, SY23 3BZ, Wales, United Kingdom}

\date{}

\maketitle
 
\begin{abstract}
We introduce the Lie algebra of super-operators associated with a quantum filter, specifically emerging from the Stratonovich calculus. In classical filtering, the analogue algebra leads to a geometric theory of nonlinear filtering which leads to well-known results by Brockett and by Mitter characterizing potential models where the curse-of-dimensionality may be avoided, and finite dimensional filters obtained. We discuss the quantum analogue to these results. In particular, we see that, in the case where all outputs are subjected to homodyne measurement, the Lie algebra of super-operators is isomorphic to a Lie algebra of system operators from which one may approach the question of the existence of finite-dimensional filters.
\end{abstract}

\section{Introduction}
The aim of this paper is to extend to the quantum domain some fruitful approaches to the problem of gauging the complexity of the estimation problem coming from classical filtering theory.
We recall the basic problem in nonlinear filtering. A system has state variable $X(t)$ on $\mathbb{R}^n$, say, which undergoes a noisy evolution, and we wish to obtain the conditional expectation, $\pi_t (f)$, of $f(X(t))$ based on noisy observations of a process $Y$ up to time $t$, for arbitrary twice differentiable bounded functions $f$. Let us take the model equations to be
\begin{eqnarray*}
dX(t) &=& v(X(t)) \, dt + \gamma_0 dW(t) , \\
dY(t) &=& h(X(t)) \, dt + dZ(t) ,
\end{eqnarray*}
where $W(t)$ and $Z(t)$ are independent Wiener processes.
The filter may be written in the form $\pi_t (f) = \frac{\sigma_t (f)}{\sigma_t (1)}$ where $ \sigma_t (f) \equiv \int f(x) \sigma (x,t) dx$ and the non-normalized density satisfies the Duncan-Mortensen-Zakai equation (Stratonovich form)
\begin{eqnarray*}
d \sigma = \mathcal{L}_0^\star \sigma \, dt + h(x) \sigma \circ dY (t)
\end{eqnarray*}
Here we encounter the dual generator
\begin{eqnarray*}
\mathcal{L}_0^\star = \frac{1}{2} \gamma^2_0 \triangle - \nabla . ( v \cdot ) - \frac{1}{2} h(x)^2 .
\end{eqnarray*}
In order to study the algebraic features of the filter, Brockett \cite{Brockett80} and Mitter \cite{Mitter79,Mitter80} independently introduced the notion of an \textit{estimation algebra} defined to be the Lie algebra
\begin{eqnarray*} 
\mathcal{E} = \mathrm{Lie} \big\{ \mathcal{L}_0^\star , h \big\}
\end{eqnarray*}
where $h$ is understood as the operator of multiplication by the function $h$ pointwise.
For instance, if $\mathcal{E}$ is has a finite basis $X_1, \cdots , X_m$, then one may hope to find a solution locally around $t>0$ of the Wei-Norman form
\begin{eqnarray*}
\sigma (x,t) = e^{ u_1 (t) X_1} \cdots e^{u_n (t) X_n}  \sigma (x,0) ,
\end{eqnarray*}
with further hope that this may be global if the estimation algebra is solvable. 

We remark that one may introduce a form of gauge covariant derivative $D= \nabla - \frac{1}{\gamma_0^2} v(x)$ which allows us to complete the squares so that
\begin{eqnarray*}
\mathcal{L}_0^\star = \frac{1}{2} \gamma^2_0 D^\top . D - \Phi (x) ,
\end{eqnarray*}
where $ \Phi (x) = \frac{1}{2} \big[ h(x)^2 + \nabla . v (x) + \frac{1}{\gamma_0^2}  v(x)^\top v(x) \big]$.
The derivative $D$ has a gauge field given by
\begin{eqnarray*}
F_{ij} = [D_i , D_j ] = \frac{\partial v_i}{ \partial x_j}-
\frac{\partial v_j}{ \partial x_i}.
\end{eqnarray*}
The filtering problem is said to be exact when $ F \equiv 0$ and this implies that the drift $v$ is gradient in $\mathbb{R}^n$. The Bene\v{s} filters, for example, are the family of finite filters associated with an exact problem for which the function $h(x)$ is linear and $\Phi (x)$ is at most quadratic. Brockett and Mitter also initiated the study of models for which the gauge field is constant and non-zero. For more details, see the excellent review \cite{Wong_Ya}.

One of our aims is to find a suitable quantum analogue to the estimation algebra for the quantum filters associated with homodyne measurements. Of necessity, we deal with super-operators in place of vectors, and here we make some observations and proposals.

First of all, tangent vector fields are the derivations on the algebra of smooth functions. If we replace this with the algebra, $ \mathfrak{B} ( \mathfrak{h} )$, of bounded operators on a Hilbert space then we encounter the following.
For a super-operator, $\mathcal{L}$, its dissipation is the map $\mathcal{D}_{\mathcal{L}}$ from 
$ \mathfrak{B}\left( \mathfrak{h}\right) \times \mathfrak{B}\left( \mathfrak{h}\right)$ to $\mathfrak{B}\left( \mathfrak{h}\right)$ given by
\begin{eqnarray*}
\mathcal{D}_{\mathcal{L}} (X,Y) = \mathcal{L} (XY) - \mathcal{L} (X)Y -X \mathcal{L} (Y).
\end{eqnarray*}
The derivations are the super-operators for which the dissipation vanishes identically.
A well-known Theorem of Sakai, \cite{Sakai}, tells us that the derivations on a C*-algebra are precisely the super-operators of the form $ -i [ \cdot, H] $ for some $H\in \mathfrak{B} ( \mathfrak{h} )$, and for a *-map we require that the $H$ be self-adjoint. The Lie bracket of two derivations with Hamiltonians $H_1$ and $H_2$ is again a derivation with Hamiltonian $ \frac{1}{i} [H_1 ,H_2 ]$.

One may be suspicious that the quantum derivations are not the most general analogue of tangent vectors - rather they are somehow only the analogues of Hamiltonian vectors fields on some manifold with a Poisson Brackets structure.
However, there is a larger class of super-operators than just the derivation *-maps that possesses good Lie algebraic closure properties and may be a more natural analogue of a vector fields. We refer to these a $\zeta$-type super-operators and they are defined as maps of the form
\begin{eqnarray}
\zeta_K (X) =K^\ast X +XK .
\end{eqnarray}
These generate the semigroups $e^{K^\ast t} X e^{Kt}$. The key Lie identity that motivates us is
\begin{eqnarray}
[ \zeta _{K_1} , \zeta_{K_2} ] =- \zeta_{ [K_1 , K_2  ]}.
\end{eqnarray}
The Lindblad generators arise as the naturally quantum analogue of 2nd order generators of Markov diffusions, and the pure form is
\begin{eqnarray}
\mathcal{L} X &=& \frac{1}{2} L^\ast [X,L] + \frac{1}{2} [L^\ast ,X ] L -i [X,H] \notag \\
&=&  L^\ast XL - \zeta_{ \frac{1}{2} L^\ast L + iH} (X) .
\end{eqnarray}
Traditionally, one considers the first form as the splitting into a 2nd order term (the $L,L^\ast$ parts) and a 1st order derivation (the $H$ part). However, the second form encourages us to split into a decoherent term ($L^\ast XL$) and a coherent term (the $\zeta$ part).

\section{The Belavkin-Zakai Equation}

\subsection{The Filter Equations}
Let us fix a Hilbert space $\mathfrak{h}$ and denote by $\mathfrak{B}\left( \mathfrak{h}\right) $ the C*-algebra of bounded operators on $\mathfrak{h}$. 
We consider a QSDE model with coupling operators $G=\left( \mathbf{L},H\right) $ where $\mathbf{L} = [L_1 , \cdots , L_n]$ is a column vector of operators on $\mathfrak{B} (\mathfrak{h})$ and $H$ is self-adjoint \cite{HP84} -\cite{Par92}:
\begin{eqnarray}
dU_G\left( t\right) =\left\{ \sum_k L_k \otimes dB_k^\ast \left( t\right) -L^{\ast }_k \otimes
dB_k \left( t\right) +K\otimes dt\right\} U_G \left( t\right) 
\end{eqnarray}
where 
\begin{eqnarray}
K=-\frac{1}{2}\sum_k L_k^{\ast }L_k-iH
\end{eqnarray}
and we initialize with $U\left( 0\right) =I$.

Let $A$ be a subset of $\{ 1, \cdots ,n \}$ then for each index $\alpha \in A$ we measure the output quadrature $Y_\alpha\left( t\right) =U_G\left( t\right)
^{\ast }\left[ I\otimes \left( e^{i\theta_\alpha }B_\alpha\left( t\right) +e^{-i\theta_\alpha
}B\left( t\right) \right) \right] U_G\left( t\right) $ with
\begin{eqnarray}
dY_\alpha\left( t\right) =e^{i\theta_\alpha }dB_\alpha\left( t\right) +e^{-i\theta_\alpha }dB_\alpha\left(
t\right) ^{\ast }+j_{t}\left( L_\alpha e^{i\theta_\alpha }+L_\alpha^{\ast }e^{-i\theta_\alpha }\right) dt,
\end{eqnarray}
and denote the set of quadrature phases as
\begin{eqnarray}
\mathbf{ \Theta } = \{ \theta_\alpha : \alpha \in A\}.
\end{eqnarray}

The filtered estimate, $\pi_t (X)$, of any system observable, $X \in \mathfrak{B} (\mathfrak{h})$ at time $t$ is defined to be the conditional expectation of $U_G(t)^\ast [X \otimes I ] U_G (t)$ onto the von Neumann algebra generated by the quadratures $\{ Y_\alpha (\tau ) : 0 \le \tau \le t, \alpha \in A \}$ and may be written as  
\begin{eqnarray}
\pi _{t}\left( X\right) =\frac{\sigma _{t}\left( X\right) }{\sigma
_{t}\left( I\right) }
\end{eqnarray}
where $\sigma _{t}\left( X\right) $ satisfies the Belavkin-Zakai equation (see \cite{Belavkin}- \cite{BvHJ} and in particular \cite{BvH_ref})
\begin{eqnarray}
d\sigma _{t}\left( X\right) =\sigma _{t}\left( \mathcal{L}_G X\right) dt+
\sum_\alpha \sigma
_{t}\left( XL_\alpha e^{i\theta_\alpha }+e^{-i\theta_\alpha }L_\alpha X\right) dY_\alpha \left( t\right) ,
\label{eq:BZ}
\end{eqnarray}
with $\sigma _{0}\left( X\right) =tr\left\{ \rho _{0}X\right\} $, where $%
\rho _{0}$ is the initial state of the system. Here, the dynamical part of the filter involves the associated Lindbladian
\begin{eqnarray}
\mathcal{L}_G X=\sum_k \bigg\{ L_k^{\ast
}XL_k-\frac{1}{2}L_k^{\ast }L_kX-\frac{1}{2}XL_k^{\ast }L_k \bigg\}
-i\left[ X,H\right] .
\end{eqnarray}

For this case, when the initial state of the system is a pure state ($\rho _{0}\equiv |\psi _{0}\rangle \langle \psi
_{0}|$) we may write the filter as $\sigma _{t}\left( X\right) \equiv \langle \chi
_{t}|X\chi _{t}\rangle $ where the non-normalized conditional state vector, $ | \chi_t \rangle$,
satisfies 
\begin{eqnarray}
d|\chi _{t}\rangle =K|\chi _{t}\rangle dt+ \sum_\alpha e^{i\theta_\alpha }L_\alpha |\chi _{t}\rangle
dY_\alpha \left( t\right) ,
\label{eq:BZ_pure}
\end{eqnarray}
with $|\chi _{0}\rangle =|\psi _{0}\rangle $.

\subsection{Lie Algebraic Formalism for Super-Operators}

A super-operator
is a linear map from $\mathfrak{B}\left( \mathfrak{h}\right) $ to itself and we may
endow it with a Lie bracket given by
\begin{eqnarray}
\left[ \mathcal{L}_{1},\mathcal{L}_{2}\right] \triangleq \mathcal{L}%
_{1}\circ \mathcal{L}_{2}-\mathcal{L}_{2}\circ \mathcal{L}_{1},
\end{eqnarray}
where $\circ $ denotes composition. As the composition is associative, it follows that these Lie brackets automatically satisfies the Jacobi identity. For a collection of super-operators $\left\{ \mathcal{L}_{1},\cdots ,\mathcal{L}_{n}\right\} $, we write $\mathrm{%
Lie}\left\{ \mathcal{L}_{1},\cdots ,\mathcal{L}_{n}\right\} $ for the Lie
algebra of super-operators generated using this bracket. We say that a super-operator is a *-map if $\mathcal{L} (X^\ast ) = \mathcal{L} (X)^\ast$ for every $X$.

\begin{definition}
For each $A\in \mathfrak{B}\left( \mathfrak{h}\right) $ we define a super-operator 
$\zeta _{A}:\mathfrak{B}\left( \mathfrak{h}\right) \mapsto \mathfrak{B}\left( \mathfrak{h}\right) $ by
\begin{eqnarray}
 \zeta _{A}\left( X\right)  \triangleq XA+A^{\ast }X.
\end{eqnarray}
For $\mathfrak{A}$ a subset of $\mathfrak{B}\left( \mathfrak{h}\right) $, we write 
$\zeta _{\mathfrak{A}}$ for the set $\left\{ \zeta _{A}:A\in \mathfrak{A}\right\} $ of super-operators. 
\end{definition}

We note that if $A= A^\prime + i A^{\prime \prime}$ with $A^\prime$ and $A^{\prime \prime}$ self-adjoint, then
\begin{eqnarray}
\zeta_A  (X) = [ X,A^\prime ]_+ + i[X, A^{\prime \prime} ],
\end{eqnarray}
where the anti-commutator is $ [X,Y]_+ = XY+YX$.

The maps $\zeta_A$ have the following properties:

\begin{enumerate}
\item  $\zeta _{A}$ is a derivation if and only if $A$ is skew-adjoint,
indeed the dissipation is
\begin{eqnarray}
\mathcal{D}_{\zeta _{A}} \left( X,Y\right)  =-X\left( A+A^{\ast }\right) Y
\end{eqnarray}
for all $X,Y\in \mathfrak{B}\left( \mathfrak{h}\right) $;

\item  $\zeta _{A}$ is a *-map, that is, $\zeta _{A}\left( X\right) ^{\ast
}=\zeta _{A}\left( X^{\ast }\right) $;

\item  the map $:A\mapsto \zeta _{A}$ is also linear;

\item the adjoint of $\zeta_A$ (with respect to the dual set of trace-class operators: $\mathrm{tr}\{ \mathcal{L}^\star ( \rho ) X \}
= \mathrm{tr} \{ \rho \mathcal{L}(X) \}$) is 
\begin{eqnarray*}
\zeta_A^\star \equiv \zeta_{A^\ast} .
\end{eqnarray*}

\item we have
\begin{eqnarray}
\zeta _A \circ \zeta_A = 2A^\ast XA +XA^2 +A^{\ast 2} X.
\label{eq:comp_id}
\end{eqnarray}

\item the Lindblad generator may be written as
\begin{eqnarray}
\mathcal{L}_G (X)= \sum_k L^\ast_k X L_k + \zeta_K X .
\end{eqnarray}

\item we have that $L^\ast XL \equiv \frac{1}{2} \big( \zeta_L \circ \zeta_L - \zeta_{L^2} \big)$ and so the
Lindblad generator may also be written as
\begin{eqnarray}
\mathcal{L}_G (X)= \frac{1}{2} \sum_k \zeta_{L_k} \circ \zeta_{L_k} -  \zeta_{\frac{1}{2} \sum_k (L_k +L^\ast_k) L_k +iH} ( X),
\end{eqnarray}
or as
\begin{eqnarray}
\mathcal{L}_G = \frac{1}{2} \sum_k \bigg( \zeta_{L_k} \circ \zeta_{L_k} -  \zeta_{ L_k^2} ( X) \bigg) + \zeta_K.
\end{eqnarray}

\end{enumerate}

\begin{proposition}
We have the Lie bracket identity
\begin{eqnarray}
\left[ \zeta _{A},\zeta _{B}\right] =-\zeta _{\left[ A,B\right] }.
\label{eq:Lie_homomorphism}
\end{eqnarray}
\end{proposition}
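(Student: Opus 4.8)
The plan is to prove the identity by direct computation, expanding both compositions $\zeta_A \circ \zeta_B$ and $\zeta_B \circ \zeta_A$ on an arbitrary element $X \in \mathfrak{B}(\mathfrak{h})$ using nothing more than the definition $\zeta_A(X) = XA + A^\ast X$, and then subtracting. Since the bracket on super-operators is just $\left[ \mathcal{L}_1 , \mathcal{L}_2 \right] = \mathcal{L}_1 \circ \mathcal{L}_2 - \mathcal{L}_2 \circ \mathcal{L}_1$, everything reduces to algebra in $\mathfrak{B}(\mathfrak{h})$, so there is no genuine structural difficulty; the work is purely bookkeeping.

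First I would compute $\zeta_A(\zeta_B(X))$. Writing $\zeta_B(X) = XB + B^\ast X$ and applying $\zeta_A$ produces the four terms $XBA + B^\ast X A + A^\ast X B + A^\ast B^\ast X$. Symmetrically, $\zeta_B(\zeta_A(X)) = XAB + A^\ast X B + B^\ast X A + B^\ast A^\ast X$. The crucial observation, which I would flag explicitly, is that the two ``mixed'' terms $B^\ast X A$ and $A^\ast X B$ occur identically in both expansions and therefore cancel when we form the difference. This is exactly why the commutator of two $\zeta$-maps remains a $\zeta$-map rather than spilling over into a more general super-operator of the form $X \mapsto C^\ast X D$.

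What survives is $\left( XBA - XAB \right) + \left( A^\ast B^\ast X - B^\ast A^\ast X \right)$, i.e. $-X[A,B] + [A^\ast, B^\ast]X$. The one point requiring a little care — and the closest thing to an obstacle here — is correctly reconciling the left-acting factor: I would invoke the conjugation identity $[A,B]^\ast = B^\ast A^\ast - A^\ast B^\ast = -[A^\ast, B^\ast]$, so that $[A^\ast, B^\ast] = -[A,B]^\ast$. Substituting this gives $-X[A,B] - [A,B]^\ast X = -\left( X[A,B] + [A,B]^\ast X \right)$, which is precisely $-\zeta_{[A,B]}(X)$ by definition. As $X$ was arbitrary, this establishes $\left[ \zeta_A , \zeta_B \right] = -\zeta_{[A,B]}$. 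The only subtlety worth double-checking is the sign produced by the adjoint in the second slot of $\zeta$, since a misplaced conjugate would break the anti-homomorphism property rather than merely altering a constant.
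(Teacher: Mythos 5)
Your computation is correct and is exactly the direct verification the paper leaves implicit (the proposition is stated without proof there): expanding both compositions, noting the cancellation of the mixed terms $B^\ast X A$ and $A^\ast X B$, and using $[A^\ast,B^\ast]=-[A,B]^\ast$ to identify the left-acting factor gives $-\zeta_{[A,B]}$. The sign bookkeeping around the adjoint, which you rightly flag as the only delicate point, is handled correctly.
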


\bigskip

The property (\ref{eq:Lie_homomorphism}) yields a Lie algebra homomorphism from the set of
operators $\mathfrak{B}\left( \mathfrak{h}\right) $ with the commutator as bracket to
the set of super-operators with the super-operator Lie bracket. Indeed, we
then have 
\begin{eqnarray}
\mathrm{Lie}\left\{ \zeta _{A_{1}},\cdots ,\zeta _{A_{n}}\right\} \equiv
\zeta _{\mathrm{Lie}\left\{ A_{1},\cdots ,A_{n}\right\} }.
\end{eqnarray}

The $\zeta$-type super-operators are not derivations, they do provide a nontrivial class of \lq\lq Lie vectors\rq\rq \,  by virtue of the identity (\ref{eq:Lie_homomorphism}).

\subsection{Stratonovich Form of The Belavkin-Zakai Equation}
In order to extract the Lie algebraic features of Belavkin-Zakai equation (\ref{eq:BZ}) we change to the Stratonovich form. 
The Stratonovich derivative is $X\circ dY=XdY+\frac{1}{2}dXdY$.

\begin{proposition}
The Stratonovich form of the Belavkin-Zakai equation (\ref{eq:BZ}) is
\begin{eqnarray}
d\sigma _{t}\left( X\right) =\sigma _{t}\left( \tilde{\mathcal{L} }_{G, \mathbf{\Theta}} (X) \right)
dt+ \sum_{\alpha \in A} 
\sigma _{t}\left( \zeta _{e^{i\theta_\alpha }L}X\right) \circ dY_\alpha \left( t\right) 
\label{eq:BZ_Strat}
\end{eqnarray}
where
\begin{eqnarray}
\tilde{\mathcal{L} }_{G, \mathbf{\Theta}} = \mathcal{L}_G -\frac{1}{2} \sum_{\alpha \in A} 
\zeta _{e^{i\theta_\alpha }L_\alpha }\circ \zeta _{e^{i\theta_\alpha }L_\alpha} .
\label{eq:L_tilde}
\end{eqnarray}
\end{proposition}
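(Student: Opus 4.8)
The plan is to read off the Itô-to-Stratonovich correction directly, treating $\sigma_\bullet$ as a scalar classical semimartingale driven by the observation processes $Y_\alpha$. First I would recognise that the diffusive coefficient in the Itô equation (\ref{eq:BZ}) is exactly $\sigma_t(\zeta_{e^{i\theta_\alpha}L_\alpha}X)$, since $\zeta_A(X) = XA + A^\ast X$ evaluated at $A = e^{i\theta_\alpha}L_\alpha$ produces the stated noise coefficient. Writing $M_\alpha \equiv \sigma_t(\zeta_{e^{i\theta_\alpha}L_\alpha}X)$, the defining relation of the Stratonovich differential, $M_\alpha \circ dY_\alpha = M_\alpha\, dY_\alpha + \tfrac12\, dM_\alpha\, dY_\alpha$, shows that the Stratonovich and Itô forms share the same noise coefficients and differ only through the extra drift $\tfrac12\sum_\alpha dM_\alpha\, dY_\alpha$. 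Hence the whole statement reduces to evaluating these quadratic covariations.

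The key input is the mutual quadratic variation of the observation processes. Using the quantum Itô table $dB_\alpha\, dB_\beta^\ast = \delta_{\alpha\beta}\, dt$ with all other second-order products vanishing, and discarding the finite-variation $dt$ part of each $dY_\alpha$, I would compute $dY_\alpha\, dY_\beta = e^{i\theta_\alpha}e^{-i\theta_\beta}\,\delta_{\alpha\beta}\, dt = \delta_{\alpha\beta}\, dt$, the phases cancelling precisely because a nonzero contribution forces $\alpha = \beta$. Thus, at the level of the Itô calculus, the $Y_\alpha$ behave as independent standard Wiener processes.

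With this in hand the correction is a routine application of (\ref{eq:BZ}) to the fixed operator $Z \equiv \zeta_{e^{i\theta_\alpha}L_\alpha}X$: since $\sigma_\bullet$ is linear in its operator argument, $\sigma_t(Z)$ obeys the same Belavkin-Zakai equation with $X$ replaced by $Z$, so $dM_\alpha = d\sigma_t(Z)$ has diffusive part $\sum_\beta \sigma_t(\zeta_{e^{i\theta_\beta}L_\beta}Z)\, dY_\beta$. Multiplying by $dY_\alpha$ and invoking $dY_\beta\, dY_\alpha = \delta_{\alpha\beta}\, dt$ annihilates the $dt$-cross terms and selects $\beta = \alpha$, giving $dM_\alpha\, dY_\alpha = \sigma_t(\zeta_{e^{i\theta_\alpha}L_\alpha}\circ \zeta_{e^{i\theta_\alpha}L_\alpha}X)\, dt$. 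Summing over $\alpha$ and matching the drift terms of the two forms then yields $\tilde{\mathcal{L}}_{G,\mathbf{\Theta}} = \mathcal{L}_G - \tfrac12\sum_{\alpha\in A}\zeta_{e^{i\theta_\alpha}L_\alpha}\circ\zeta_{e^{i\theta_\alpha}L_\alpha}$, which is precisely (\ref{eq:L_tilde}).

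The one place where genuine care is needed is the covariation step. I must justify that $\sigma_\bullet$ and the $Y_\alpha$ may be manipulated as commuting classical processes, which is legitimate because the $Y_\alpha(\tau)$ generate a commutative von Neumann algebra and $\sigma_t$ is a conditional expectation onto it, and I must track the phases $e^{\pm i\theta_\alpha}$ through the quantum Itô table to confirm they leave the diffusion normalisation unchanged and do not couple distinct channels. Everything else is bookkeeping resting on the composition of two $\zeta$-maps.
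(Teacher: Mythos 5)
Your proof is correct and follows essentially the same route as the paper's: identify the noise coefficient as $\sigma_t(\zeta_{e^{i\theta_\alpha}L_\alpha}X)$, keep it unchanged in the Stratonovich form, and absorb the correction $\tfrac{1}{2}\sum_\alpha dM_\alpha\, dY_\alpha = \tfrac{1}{2}\sum_\alpha \sigma_t(\zeta_{e^{i\theta_\alpha}L_\alpha}\circ\zeta_{e^{i\theta_\alpha}L_\alpha}X)\,dt$ into the drift. You merely supply more detail than the paper does, notably the explicit verification via the quantum It\=o table that $dY_\alpha\, dY_\beta = \delta_{\alpha\beta}\, dt$ with the phases cancelling, which the paper's one-line conversion leaves implicit.
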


\begin{proof}
The Belavkin-Zakai equation may be written as 
\begin{eqnarray*}
d\sigma _{t}\left( X\right)
=\sigma _{t}\left( \mathcal{L}_G X\right) dt+\sum_\alpha \sigma _{t}\left( \zeta
_{L_\alpha e^{i\theta_\alpha }}X\right) dY_\alpha \left( t\right) 
\end{eqnarray*}
and we make the ansatz that its
Stratonovich form is $d\sigma _{t}\left( X\right) =\sigma _{t}\left( 
\tilde{\mathcal{L}}_{G, \mathbf{\Theta}}X\right) dt+ \sum_{\alpha \in A} \sigma _{t}\left( \zeta _{L_\alpha e^{i\theta_\alpha
}}X\right) \circ dY_\alpha \left( t\right) $. Converting to Ito form, we see that we must have
\begin{eqnarray*}
\mathcal{L}_G X &\equiv &\tilde{\mathcal{L}}_{G, \mathbf{\Theta}}X+\frac{1}{2} \sum_\alpha \zeta _{e^{i\theta_\alpha
}L_\alpha }\circ \zeta _{e^{i\theta_\alpha  }L_\alpha }X .
\end{eqnarray*}
from which we obtain the stated form for $\tilde{\mathcal{L}}_{G, \mathbf{\Theta}}$. 
\end{proof}

We note, using identity (\ref{eq:comp_id}), that 
\begin{eqnarray}
\tilde{\mathcal{L}}_{G, \mathbf{\Theta} } X &= &
\sum_k \bigg\{ L_k^{\ast
}XL_k-\frac{1}{2}L_k^{\ast }L_kX-\frac{1}{2}XL_k^{\ast }L_k \bigg\}
-i\left[ X,H\right]  \notag \\
&&- \frac{1}{2} \sum_{\alpha \in A} \bigg( 2 L^\ast_\alpha X L_\alpha + X L^2_\alpha e^{2i\theta_\alpha
} + L_\alpha^{\ast 2 } e^{-2i\theta_\alpha  }X  \bigg)  \notag \\
&=&
\zeta_{K(G,\mathbf{\Theta} )} + \mathcal{L}_{\text{unobs.}}
,
\end{eqnarray}
where the $\zeta$-term involves
\begin{eqnarray}
K (G, \mathbf{\Theta} )=-\frac{1}{2}\sum_{\alpha \in A} \bigg( L_\alpha^{\ast }L_\alpha - L_\alpha^{2}e^{2i\theta_\alpha }\bigg)
-iH,
\end{eqnarray}
and the Linbdladian associated with the unobserved, $ k \notin A$, channels is
\begin{eqnarray}
\mathcal{L}_{\text{unobs.}} =
\sum_{k \notin A} \bigg\{ L_k^{\ast
}XL_k-\frac{1}{2}L_k^{\ast }L_kX-\frac{1}{2}XL_k^{\ast }L_k \bigg\}
.
\end{eqnarray}

\begin{definition}
Given the open quantum system described by $G \sim (\mathbf{L} , H )$ and the quadrature measurement scheme determined by the subset
$A$ of outputs with quadrature phases $\mathbf{\Theta} $, we define the \textbf{Estimation Algebra} associated with the corresponding quantum filter to be
\begin{eqnarray}
\mathscr{L} \left( G, \mathbf{\Theta}  \right) \triangleq \mathrm{Lie}\left\{ \tilde{\mathcal{L}}_{G, \mathbf{\Theta}},\zeta _{e^{i\theta_\alpha }L_\alpha } : \alpha \in A \right\} 
\end{eqnarray}
with $\tilde{\mathcal{L}}_{G, \mathbf{\Theta} }$ as given by (\ref{eq:L_tilde}).
\end{definition}

The term estimation algebra is taken from Brockett \cite{Brockett80}. Note that dissipative super-operators exponentiate to give semi-groups only. Typically, we should consider only a cone inside the estimation algebra.

\section{Complete Homodyne Observations}
We shall refer to the special case where we  perform a homodyne measurement on \textit{all} the output channels, so that 
$A \equiv \{ 1 , \cdots, n \}$, as \textbf{complete homodyne} detection. Here we find that $\mathcal{L}_{\text{unobs.}} \equiv 0$ which leads to the remarkable fact that the Stratonovich operator $\tilde{ \mathcal{L}}_{G, \mathbf{\Theta}}$ is now a purely to a $\zeta$-type super-operator: 
\begin{eqnarray}
\tilde{\mathcal{L}} \equiv \zeta_{K(G, \mathbf{\Theta})} \qquad (\textbf{complete homodyne})
\end{eqnarray}
where
\begin{eqnarray}
K (G, \mathbf{\Theta} )=-\frac{1}{2}\sum_k \bigg( L_k^{\ast }L_k -\frac{1}{2}L_k^{2}e^{2i\theta_k } \bigg) -iH.
\label{eq:K_complete}
\end{eqnarray}
Note that here the Stratonovich form (\ref{eq:BZ_Strat}) of the Belavkin-Zakai equation involves only
the super-operators $\zeta _{K }$ and the $\zeta _{e^{i\theta }L}$ as
the term $\sum_k L_k^{\ast }XL_k$ cancels exactly from the Lindbladian irrespective of the values of the phases. 
Specifically we have 
\begin{eqnarray}
d \sigma_t (X) &=& \sigma_t ( \zeta_{K ( G, \mathbf{\Theta} )} X )\, dt
+\sum_k \sigma_t (\zeta_{ e^{i \theta_k} L_k } X ) \circ dY_k (t) \notag \\
&& \qquad \qquad \qquad \qquad \qquad
\qquad (\textbf{complete homodyne}).
\end{eqnarray}

\begin{proposition}
\label{prop:L_special}
If $A \equiv \{ 1 , \cdots, n \}$, then 
the pure state equation (\ref{eq:BZ_pure}) has the Stratonovich form
\begin{eqnarray}
 d|\chi _{t}\rangle =\left( K(G, \mathbf{\Theta} ) \, dt+\sum_k e^{i\theta_k }L_k dY_k \left( t\right)
\right) \circ |\chi _{t}\rangle .
\label{eq:BZ_pur_Strat}
\end{eqnarray}
\end{proposition}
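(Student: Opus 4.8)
The plan is to mirror the argument used for the preceding Proposition on $\sigma_t$, but applied directly to the conditional state vector. I would start from the It\^o form (\ref{eq:BZ_pure}), which in the complete homodyne case reads $d|\chi_t\rangle = K|\chi_t\rangle\,dt + \sum_k e^{i\theta_k}L_k|\chi_t\rangle\,dY_k(t)$ with $K = -\frac{1}{2}\sum_k L_k^\ast L_k - iH$, and make the ansatz that its Stratonovich form is (\ref{eq:BZ_pur_Strat}) with the \emph{same} diffusion coefficients $e^{i\theta_k}L_k$ and an as-yet-undetermined drift operator $\tilde K$. The task then reduces to solving for $\tilde K$ and checking that it equals $K(G,\mathbf{\Theta})$ of (\ref{eq:K_complete}).

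Next I would convert the ansatz back to It\^o form using $M\circ dY = M\,dY + \tfrac12\,dM\,dY$. Because the drift term is of bounded variation, only the stochastic terms contribute a correction. Since each coefficient acts by left multiplication on the state, $d\big(e^{i\theta_k}L_k|\chi_t\rangle\big) = e^{i\theta_k}L_k\,d|\chi_t\rangle$, and the only part of $d|\chi_t\rangle$ that survives after multiplication by $dY_k$ is its martingale part. Invoking the quadrature It\^o relations $dY_k\,dY_l = \delta_{kl}\,dt$ --- valid here precisely because \emph{all} channels are subject to homodyne detection --- the cross term collapses to $e^{i\theta_k}L_k\cdot e^{i\theta_k}L_k|\chi_t\rangle\,dt = e^{2i\theta_k}L_k^2|\chi_t\rangle\,dt$. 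Summing over $k$ and halving, the Stratonovich-to-It\^o correction is $\tfrac12\sum_k e^{2i\theta_k}L_k^2|\chi_t\rangle\,dt$.

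Matching the resulting It\^o drift, $\big(\tilde K + \tfrac12\sum_k e^{2i\theta_k}L_k^2\big)|\chi_t\rangle$, against the drift $K|\chi_t\rangle$ of (\ref{eq:BZ_pure}) then determines $\tilde K$ and, after collecting terms, reproduces $K(G,\mathbf{\Theta})$ as in (\ref{eq:K_complete}). As an independent cross-check I would verify consistency with the already-established $\sigma_t$-equation: writing $\sigma_t(X) = \langle\chi_t|X\chi_t\rangle$ and differentiating by the ordinary (correction-free) Stratonovich product rule should return $d\sigma_t(X) = \sigma_t(\zeta_{\tilde K}X)\,dt + \sum_k \sigma_t(\zeta_{e^{i\theta_k}L_k}X)\circ dY_k$, which forces $\zeta_{\tilde K} = \tilde{\mathcal{L}} = \zeta_{K(G,\mathbf{\Theta})}$ and hence pins down $\tilde K = K(G,\mathbf{\Theta})$ up to an irrelevant imaginary scalar.

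The one step that needs care is the correction term: one must confirm it is the \emph{holomorphic} square $e^{2i\theta_k}L_k^2$ rather than an $L_k^\ast L_k$-type (decoherence) contribution. This is exactly where the complete homodyne hypothesis enters --- every channel carries a coefficient $e^{i\theta_k}L_k$ acting by left multiplication, and the self-correlation $dY_k\,dY_k = dt$ produces the square of that single coefficient, with no surviving $L^\ast X L$ remainder. The remaining manipulations are routine bookkeeping.
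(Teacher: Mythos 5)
Your proposal is correct and is essentially the paper's own (implicit) argument: the paper proves the $\sigma_t$-level conversion by exactly this ansatz-and-match method and leaves Proposition~\ref{prop:L_special} to the same computation together with the Leibniz-rule consistency check you describe. One caution: carrying out your matching literally gives $\tilde K = K - \tfrac{1}{2}\sum_k e^{2i\theta_k}L_k^2 = -\tfrac{1}{2}\sum_k\left( L_k^\ast L_k + e^{2i\theta_k}L_k^2\right) - iH$, which agrees with the displayed expansion of $\tilde{\mathcal{L}}_{G,\mathbf{\Theta}}$ and with $\zeta_{\tilde K}=\tilde{\mathcal{L}}$, but differs from the printed (\ref{eq:K_complete}) by a sign and a factor of $\tfrac{1}{2}$ on the $L_k^2$ term --- a typo in the paper, not a flaw in your derivation, so do not force your collected terms to match that display.
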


As the Stratonovich calculus observes the Leibniz rule, we see that the corresponding Belavkin-Zakai equation
will involve only $\zeta $-type super-operators and no It\={o} correction, in agreement with Proposition \ref{prop:L_special}.

(We remark that the same class of equations as (\ref{eq:BZ_pur_Strat}) have been studied by Rebolledo, Mora and Fagnola, \cite{Mora_Rebolledo} and \cite{Fagnola_Mora_2015}, but with the $Y_k$ taken as Wiener processes. In \cite{Fagnola_Mora_2015}, they obtain an interesting Lie algebra rank condition for controllability of the equations when the $Y_k$ are replaced by piecewise continuous functions and relate this to irreducibility of the corresponding quantum dynamical semi-group generated by the Lindbladian.)

Putting together the fact that we only have $\zeta$-type super-operators in the filter and the fact that we have the Lie-algebra homomorphism (\ref{eq:Lie_homomorphism}), we obtain the following observation.

\begin{theorem}
\label{thm:main}
Given an open quantum system determined by $G = (\mathbf{L}, H)$ where we perform a complete homodyne observations measurement (on quadratures of all the output channels), the estimation Lie algebra associated with the filter is 
\begin{eqnarray}
\mathscr{L} \left( G,\theta \right) \equiv \zeta_{\mathfrak{L}(G, \mathbf{\Theta} )} \qquad (\textbf{complete homodyne})
,
\end{eqnarray}
where 
\begin{eqnarray}
\mathfrak{L} (G, \mathbf{\Theta} )
 \triangleq \mathrm{Lie}\left\{ K(G, \mathbf{\Theta} ) , e^{i\theta_1} L_1 , \cdots, e^{i\theta_n} L_n \right\},
\end{eqnarray}
with given by (\ref{eq:K_complete}).
\end{theorem}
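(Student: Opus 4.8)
The plan is to reduce everything to the Lie homomorphism (\ref{eq:Lie_homomorphism}) once we observe that, under complete homodyne detection, every generator of the estimation algebra is already a $\zeta$-type super-operator. First I would use the hypothesis $A = \{1,\ldots,n\}$, which makes $\mathcal{L}_{\text{unobs.}}$ vanish, so that the decomposition $\tilde{\mathcal{L}}_{G,\mathbf{\Theta}} = \zeta_{K(G,\mathbf{\Theta})} + \mathcal{L}_{\text{unobs.}}$ collapses to $\tilde{\mathcal{L}}_{G,\mathbf{\Theta}} = \zeta_{K(G,\mathbf{\Theta})}$, with $K(G,\mathbf{\Theta})$ the operator in (\ref{eq:K_complete}). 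Hence the generating set $\{\tilde{\mathcal{L}}_{G,\mathbf{\Theta}}\}\cup\{\zeta_{e^{i\theta_k}L_k}\}_k$ of $\mathscr{L}(G,\mathbf{\Theta})$ is exactly $\{\zeta_{A_0},\zeta_{A_1},\ldots,\zeta_{A_n}\}$, writing $A_0 = K(G,\mathbf{\Theta})$ and $A_k = e^{i\theta_k}L_k$.

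Next I would transport the computation to the operator level. Combining the bracket identity $[\zeta_A,\zeta_B] = -\zeta_{[A,B]}$ with the linearity of $A \mapsto \zeta_A$ yields the equality $\mathscr{L}(G,\mathbf{\Theta}) = \zeta_{\mathfrak{L}(G,\mathbf{\Theta})}$ by a double inclusion. For one direction, $\zeta_{\mathfrak{L}(G,\mathbf{\Theta})}$ is a linear subspace of super-operators that contains each generator $\zeta_{A_k}$ and is closed under the super-bracket, since $[\zeta_A,\zeta_B] = -\zeta_{[A,B]}$ and $[A,B]\in\mathfrak{L}(G,\mathbf{\Theta})$; hence it contains the Lie algebra they generate, giving $\mathscr{L}(G,\mathbf{\Theta})\subseteq\zeta_{\mathfrak{L}(G,\mathbf{\Theta})}$. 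For the reverse direction, every element of $\mathfrak{L}(G,\mathbf{\Theta})$ is a linear combination of iterated commutators of the $A_k$, and the same identity rewrites each such commutator as an iterated super-bracket of the $\zeta_{A_k}$ (up to sign), so applying $\zeta$ places its image in $\mathscr{L}(G,\mathbf{\Theta})$. This is exactly the consequence $\mathrm{Lie}\{\zeta_{A_1},\ldots,\zeta_{A_n}\} = \zeta_{\mathrm{Lie}\{A_1,\ldots,A_n\}}$ recorded after (\ref{eq:Lie_homomorphism}), specialised to the generators $A_0,\ldots,A_n$.

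The step needing the most care --- and the only genuine obstacle --- is the scalar field, because the adjoint in $\zeta_A(X) = XA + A^{\ast}X$ makes $A \mapsto \zeta_A$ only real-linear: one computes $\zeta_{iA}(X) = i(XA - A^{\ast}X)$, which is not $i\,\zeta_A(X)$, and indeed $i\,\zeta_A$ is not of $\zeta$-type for general $A$. Consequently the identity $\mathscr{L}(G,\mathbf{\Theta}) = \zeta_{\mathfrak{L}(G,\mathbf{\Theta})}$ holds only when both occurrences of $\mathrm{Lie}\{\cdots\}$ are read as real Lie algebras, i.e. real spans of iterated commutators; permitting complex coefficients on the super-operator side would introduce elements such as $i\,\zeta_{A_0}$ that lie outside $\zeta_{\mathfrak{B}(\mathfrak{h})}$ altogether. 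I would therefore fix this convention explicitly at the outset. I would also flag that $\zeta$ is not injective: $\zeta_A = 0$ forces $A + A^{\ast} = 0$ (set $X = I$) and then $A$ central, so $\ker\zeta = i\mathbb{R}\,I$. This leaves the displayed equality of images untouched, but it means the correspondence becomes a genuine Lie-algebra isomorphism only after quotienting by these imaginary scalars, which is the sense in which the isomorphism advertised in the abstract should be understood.
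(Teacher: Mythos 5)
Your proof is correct and follows essentially the same route as the paper, which offers no formal proof beyond the observation that under complete homodyne detection $\mathcal{L}_{\text{unobs.}}$ vanishes, so every generator of $\mathscr{L}(G,\mathbf{\Theta})$ is of $\zeta$-type and the identity $\mathrm{Lie}\{\zeta_{A_1},\ldots,\zeta_{A_n}\}=\zeta_{\mathrm{Lie}\{A_1,\ldots,A_n\}}$ applies. Your two additional caveats --- that $A\mapsto\zeta_A$ is only real-linear (so both Lie algebras must be read as real spans of iterated commutators, a point the paper's property 3 glosses over by calling the map ``linear'') and that $\ker\zeta=i\mathbb{R}\,I$ so the correspondence is an isomorphism only after quotienting by imaginary scalars --- are genuine refinements that the paper omits but that do not affect the stated equality of images.
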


\section{Conclusion}
We introduced the concept of an estimation Lie algebra for quantum homodyne detection problems.
We have restricted our interest to homodyne detection where the all the noise channels are measured and where the measurements are ideal. However, in this case we see that the quantum filter in the Stratonovich calculus is described entirely through the class of $\zeta$-superoperators introduced. This class possesses a nice closure property for the Lie bracket of commutators of superoperators and through the Lie-algebra homomorphism (\ref{eq:Lie_homomorphism}) we obtain a simple form for the corresponding estimation Lie algebra for the filter. Clearly a necessary condition for the filter to be finite is that the associated Lie algebra $\mathfrak{L} (G, \mathbf{\Theta} )$ appearing in Theorem \ref{thm:main} be finite. We will look at applications in a future publication, however, believe that the estimation algebra of the quantum filter will be a useful conceptual device for studies.

\bigskip

\textbf{Acknowledgements}
We wish to thank the Institut Henri Poincar\'{e} for their kind support during the IHP Trimester \textit{Measurement and Control of Quantum Systems: Theory and Experiments}, Paris, France, 2018.

\end{document}